\documentclass[twoside,10pt]{article}
\usepackage[utf8]{inputenc}
\usepackage{t1enc}
\usepackage{amsmath}
\usepackage{amssymb}
\usepackage{ntheorem-hyper}
\usepackage{enumerate}
\newcommand{\Vx}{\underline{x}}

\newcommand{\Vy}{\underline{y}}
\newcommand{\VX}{\underline{X}}

\newcommand{\irott}{\mathcal}
\DeclareMathOperator{\Tr}{Tr}
\theoremstyle{plain}
\newtheorem{Def}{Definition}
\newtheorem{remark}{Remark}
\newtheorem{Tet}{Theorem}
\newtheorem{Lem}{Lemma}
\newenvironment{proof}[1][Proof]{\textbf{#1:}}{\hfill$\blacksquare$\newline}

\renewcommand{\S}{S}
%
%%%%%%%%%% PD

\def\iM{{\cal M}}

\def\iE{{\cal E}}

\def\eps{\varepsilon}
\def\Prob{\mbox{Prob}\,}
\def\rrho{\omega}
%%%%%%%%%%%
% Title Page
\title{Reaching the Holevo Capacity via von Neumann measurement, and its use}
\author{Farkas Lóránt}

\begin{document}
\maketitle
\begin{section}{Introduction}
The method of types plays very important and central role in the classical information theory. With it the central theorems can be easily and fast proved. This work try to generalise the main ideas of the method of types, and prove one of the central theorems - Reaching the Holevo capacity - in quantum environment. 

The main problem is: Suppose that we want to send information with quantum's. This problem is relevant, because 1) quantum computers would prefer this way of communication 2) The miniaturisation in the Information Technologies can lead to these type of problem. The problem can be formalised as follows: We code our classical message to quantum sequences states from $(\omega_1,\omega_2,\dots,\omega_l)$. We suppose that there is a unique non-reversible quantum transformation $\iE(\cdot)$ - quantum channel - which acts on every of these quantum's. The question is how many bits of information can be transmitted by one quantum. A theorem stated by Gordon and Levitin, proved by Holevo \cite{Holevo}, gives an upper bound to the amount of information that can be communicated. If the sender codes his information to quantum states with density matrix $\rho_i$ with a priori probabilities $p_l$ then the communicated information cannot be bigger than
\begin{equation}\S(\sum_{i=1}^l p_i \rho_i)-\sum_{i=1}^l p_i S(\rho_i)\end{equation}
where $S(\cdot)$ is the von Neumann entropy. If the outcome of the channel is $\rho_l=\iE(\omega_l)$ this gives an upper bound. So the problem is to show, that this bound can be reached.

At the end, our result is stronger than the work of Holevo \cite{Holevo2} or Schumacher and Westmoreland \cite{Schumacher}, because we will show that the decoding can be done by von Neumann measurement, not only with POVM (We doesn't use ''Pretty Good Measurement`` as in \cite{P-Good-Mes} or \cite{Holevo2}). Moreover, we show two use of the von Neumann measurement. 

The first use is that the procedure can be generalised to finite compound channel, that means, we can create an optimal coding scheme to work not only with one quantum channel, but with finite many. Definition and capacity is in section \ref{Fehaszn1} 

The other use is that we can translate decoding of classical information to decoding of classical quantum information, and with the von Neumann measurement the time of the measuring procedure can be extremely shorten, which means that classical information can be decoded by a quantum apparatus in linear! time (This is a strong result, the best codes which reach the Shannon's bound needs $n^{\log_2(n)}$ time to decode).

This work base notation is borrowed from the work of Schumacher and Westmoreland \cite{Schumacher}, but the base ideas of the proofs, comes from the classical theory e.g. \cite{Csiszar}.
\end{section}

\section{Notations and basic lemmas}\label{Jel}
Let $\iE(\,\cdot\,)$ be a given quantum channel. Assume that $\rrho_1,\rrho_2,\dots,
\rrho_l=\rrho_1^l$ are input density matrices, with same dimension $d$ and $P=(p_1,p_2,\dots,p_n)$ is a 
probability distribution such that they  maximise the Holevo quantity
\begin{equation}
\chi(\iE,P,\rrho_1^l)=S(\iE(\rrho))-\sum_{i=1}^l p_i S(\iE(\rrho_i))\, , 
\end{equation}
where $\rrho=\sum_{i=1}^l p_i\rrho_i$.
Denote the possible outputs of the quantum channel by $\rho_i=\iE(\rrho_i),
\rho=\iE(\rrho)$, these are represented by $d\times d$ density matrices.

Fix $n$, the length of the (quantum) codewords. We generate randomly 
$M=2^{nR}$ piece codewords of length $n$ with probability distribution 
$P$. We denote these randomly generated sequence by $\alpha_i, \quad 1
\leq i \leq M$. If a statement is true all of the index $1\leq i \leq M$, 
then we will say that it is true for $\alpha$. The $j$-th symbol of 
$\alpha$ will be denoted by $\alpha(j)$. 

For all sequence we can define a quantum sequence as follows:
\begin{equation}
\rho_\alpha=
\rho_{\alpha(1)}\otimes\rho_{\alpha(2)}\otimes\dots\otimes\rho_{\alpha(n)}
\end{equation}

We will denote by $S(\rho|\alpha)$ the quantity $\sum_{j=1}^l 
p_j S(\rho_j)$, because $nS(\rho|\alpha)$ is the expected 
value of the Neumann entropy of the quantum sequence if we know which 
randomly generated sequence ($\alpha$) was sent (so we know which basis to use), while $S(\rho)$ is the von Neumann entropy of 
the sequence if we do not know which message was sent. % Note $S(\rho|\iM)=S(\rho|\rrho)$ if all $\rrho_l$ is pure and orthogonal. 
So for fixed $P$ and $\rrho$ the Holevo capacity becomes
\begin{equation}
\chi = S(\rho)-\S(\rho|\alpha)
\end{equation}
%If all the input quantum's $\rrho_l$ are pure and orthogonal
%\begin{equation}
%\Chi = S(\rho)-\S(rho|\rrho)
%\end{equation}
Which resembles the Shannon capacity
\begin{equation}
C = H(Y)-H(Y|X)
\end{equation}
where $Y$ is the output random variable and $X$ is an input random random variable of the channel.

For a fixed $\varepsilon>0$, 
a sequence $\alpha$ is called $\eps$-typical with respect to $P$ if
\begin{equation}
2^{-n(H(P)+\varepsilon)} \leq P^n(\alpha)\leq 2^{-n(H(P)-\varepsilon)},
\end{equation}
where $H(P)$ is the Shannon entropy of $P$. We know from the law of 
large numbers that, if $n$ is  large enough then the probability
\begin{equation}
\Prob(\alpha \mbox{\ is typical})\geq 1-\varepsilon
\end{equation}
(because $\alpha$ was generated by distribution $P$), see \cite{Csiszar}.

Let the spectral decomposition of $\iE(\rho_\alpha)=\sum_{k=1}^{d^n}\lambda_{\alpha,k} |s_{\alpha,k}\rangle \langle s_{\alpha,k}|$. Because $\rho_{\alpha}$ is a tensor product, the eigenvectors are tensor products of eigenvectors of the $\rho_{\alpha(1)},\rho_{\alpha(2)},\dots, etc.$. So a measurement in the eigenbasis can be represented by a sequence, from numbers $\{1,2,\dots, d\}$ where the i-th term gives that what we would measure if we measure $\rho_{\alpha(i)}$ in its eigenbasis. Denote this correspondence by $s:\{1,2,\dots d^n\} \rightarrow \{1,2,\dots d\}^n$, note that $\lambda_{k}=\lambda_{s(k)_1}\lambda_{s(k)_2}\cdots \lambda_{s(k)_n}$. An eigenvector $|s_{\alpha,k}\rangle$ is $\delta$-typical if the above defined distribution ($\lambda_{s(k)_1}\lambda_{s(k)_2}\cdots$) is conditionally typical to the sequence $\alpha$ (see \cite{Csiszar})
\begin{equation}
-n(\S(\rho|\alpha)+\delta) \leq \log \lambda_{\alpha,k}
\leq -n(\S(\rho|\alpha)-\delta)
\end{equation}
Note that all exponent and logarithm are base of 2 across of this article.
The above definition means that if we define the typical projection as 
\begin{equation}
 Pi_\alpha=\sum_{k:s_{\alpha,k} \textnormal{is typical} } |s_{\alpha,k}\rangle \langle s_{\alpha,k}| \label{hiv2}
\end{equation}
, then 
\begin{equation}
dim(\Pi_\alpha)\leq 2^{n{\S(\rho|\alpha)+\delta}} \label{kifgasolt1}
\end{equation}
while 
\begin{align}
 1 & = \sum_{k=1}^{d^n}\lambda_{\alpha,k}\geq \sum_{k:s_{\alpha,k} \textnormal{is typical} } \lambda_{\alpha,k}\geq \sum_{k:s_{\alpha,k} \textnormal{is typical}}2^{-n(\S(\rho|\alpha)+\delta)}\\
2^{n(\S(\rho|\alpha)+\delta)}&\geq \sum_{k:s_{\alpha,k} \textnormal{is typical}}1=dim(\Pi_\alpha)
\end{align}

The pair $(\alpha,k)$ is $\delta$ typical if
\begin{equation}
-n\Big(H(P)+\S(\rho|\alpha)+\delta\Big)
\leq \log P^n(\alpha)\lambda_{\alpha,k}\leq -n\Big(H(P)+
\S(\rho|\alpha)-\delta\Big).
\end{equation}

Let the distribution of $(\alpha,s(k))$ be denoted by \begin{equation}
P_{\alpha,s(k)}=\Pi_{i=1}^n P(\alpha(i))\lambda_{\alpha(i),s(k)_i},
\end{equation}
it can be seen that this is a probability of independent, identically distributed random variables. The (Shannon) entropy of this distribution is 
\begin{equation}
H(P)+\sum_{i=1}^l p_i\S(\rho_i).
\end{equation}
So for these pair of random variables, the 
law of large numbers also true, so then by summing the probability 
of all typical pair we also get a greater number than $1-\varepsilon$ if $n$ is large enough. 
Suppose that the indexing of the eigenvalues is such that the typical eigenvalues are the first $d(\alpha)$ indices. Then
\begin{equation}
\sum_{\alpha}p_{\alpha} \sum_{i=1}^{d(\alpha)}\lambda_{\alpha,i} 
\geq 1-2\varepsilon \label{kifgasolt3}
\end{equation}
because from the sum we only left the atypical $\alpha$ (which 
probability is smaller that $\varepsilon$) and atypical $\alpha,k$ 
pairs (which probability is also smaller that $\varepsilon$)
 From this we know, that for $\tilde\rho=\sum_{\alpha}p_{\alpha} \sum_{i=1}^{d(\alpha)}\lambda_{\alpha,i} |s_{\alpha,i}\rangle \langle s_{\alpha,i}|$ it is true that $\Tr\tilde\rho\geq 1-2\varepsilon$.
 Define $\tilde\rho_\alpha$ as
\begin{equation}
 \tilde\rho_\alpha=\Pi_{\alpha} \rho_\alpha \Pi_{\alpha} \label{hiv0}
\end{equation}
 \begin{remark}\label{kifgasolt4}
	$\tilde\rho\leq \rho^{\otimes n}$ and $\tilde\rho_\alpha \leq \rho_\alpha$, and $\mathbb{E}(\tilde\rho_{\alpha})=\tilde\rho$
 \end{remark}

\begin{Lem}\label{eri=r}
For every $n\in \mathbb{N}$
\begin{equation}
\mathbb{E}[\rho_{\alpha}]=\rho^{\otimes n}
\end{equation}
\end{Lem}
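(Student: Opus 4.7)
The plan is to unpack the definition of $\rho_\alpha$ and exploit the fact that the codeword symbols $\alpha(1),\dots,\alpha(n)$ were generated independently and identically according to $P$. Once independence is invoked, the expectation of the tensor product factorises into a tensor product of expectations, and each factor will collapse to $\rho$ by the definition $\rho=\sum_{i=1}^l p_i\rho_i$.

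More concretely, I would first write
\begin{equation*}
\mathbb{E}[\rho_\alpha]=\sum_{\alpha\in\{1,\dots,l\}^n}P^n(\alpha)\,\rho_{\alpha(1)}\otimes\cdots\otimes\rho_{\alpha(n)},
\end{equation*}
then use $P^n(\alpha)=\prod_{j=1}^n p_{\alpha(j)}$ and rearrange the sum as an $n$-fold sum over the individual symbols. Because the $j$-th tensor factor depends only on $\alpha(j)$, the whole expression factorises as
\begin{equation*}
\bigotimes_{j=1}^n\Bigl(\sum_{i=1}^l p_i\rho_i\Bigr)=\rho^{\otimes n}.
\end{equation*}

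There is no real obstacle here: the statement is essentially linearity of expectation plus independence of the tensor slots, and it could also be phrased as the fact that expectation commutes with tensor products of independent random operators. The only thing to be slightly careful about is keeping the bookkeeping between the random string $\alpha$ and the per-coordinate distributions straight, but once that is written out the identity is immediate.
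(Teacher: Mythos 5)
Your proposal is correct and rests on the same underlying fact as the paper's proof, namely the product structure $P^n(\alpha)=\prod_{j=1}^n p_{\alpha(j)}$ together with distributivity of $\otimes$ over sums. The only difference is organizational: you factorise all $n$ tensor slots at once, while the paper peels off the last coordinate and runs an induction on $n$, whose inductive step is exactly your one-coordinate factorisation $\sum_{i}p_i\,\rho_{\alpha'}\otimes\rho_i=\rho_{\alpha'}\otimes\rho$.
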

\begin{proof}{Total induction on $n$}
For $n=1$ the equivalence is true by the definition of $\rho$.
Suppose that for $n=k-1$ the statement is true, then
For $n=k$ let denote $\alpha'$ an arbitrary $k-1$ length sequence then for every $\alpha$ $k$ length sequence can be written as $\alpha=(\alpha',i),\quad 1\leq i\leq l$. Then
\begin{align}
\mathbb{E}[\rho_{\alpha}]&=\sum_{\alpha}p^k(\alpha)\rho_\alpha=\sum_{\alpha'}\sum_{i=1}^l p^{k-1}(\alpha')p(i)\rho_{\alpha'}\otimes\rho_i=\\
&=\sum_{\alpha'}p^{k-1}(\alpha') \sum_{i=1}^l p_i\rho_{\alpha'}\otimes\rho_i=\sum_{\alpha'}p^{k-1}(\alpha') \rho_{\alpha'}\otimes\rho
\end{align}
but we know that for $n=k-1$ the statement is true, so
\begin{equation}
\mathbb{E}[\rho_{\alpha}]=\rho^{\otimes k-1}\otimes \rho =\rho^{\otimes k}
\end{equation}
So the statement is true for all $n \in \mathbb{N}$.
\end{proof}

If we have a projection, like $\Pi_{\alpha}$ then we can define a subspace which this projection projects to $\pi_{\alpha}=Im(\Pi_{\alpha})$. And vice versa, if we define a subspace $\pi_{\alpha}$, then we can define an orthogonal projection which project to this subspace $\Pi_{\alpha}$. This will be done throughout the paper by denoting with the same letter, indices the lowercase denotes the subspace the uppercase denotes the projections.

Consider the lattice of the projections. For 2 projection $P_1$ and $P_2$ denote $P_1 \vee P_2$ the projection which is the result of the $\vee$ operation in the net of projections (this means that $P_1\vee P_2$ is the projection which project to the subspace spanned by the range of $P_1$ and $P_2$). Similarly meaning has $P_1 \wedge P_2$ ($P_1\wedge P_2$ s the projection which projects to the subspace which is the section of the range of $P_1$ and $P_2$).

\begin{Lem}\label{tip-proj}
For every density matrices $\rho,\sigma$ and for every $\varepsilon>0$ 
there exist a projection $\Pi$, with properties
\begin{align}
	\Tr(\Pi \rho^{\otimes n} \Pi )\geq& 1-\varepsilon \label{t/1}\\
	\|\Pi \rho^{\otimes n} \Pi \| \leq& 2^{-n(\S(\rho)-\varepsilon)} \label{hiv1}\\
	\Tr(\Pi \sigma^{\otimes n} \Pi) \leq& 2^{-n(D(\rho|\sigma)-\varepsilon)} \label{t/3}
\end{align}
if $n>N(\rho,\sigma,\varepsilon)$, where $D(\rho|\sigma)$ denotes the quantum relative entropy of $\rho,\sigma$.
\end{Lem}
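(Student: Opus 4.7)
Write $\rho=\sum_i r_i|e_i\rangle\langle e_i|$ and take $\Pi$ to be the $\varepsilon$-typical projection of $\rho^{\otimes n}$: the orthogonal projection onto the span of those product eigenvectors $|e_{i_1}\rangle\otimes\cdots\otimes|e_{i_n}\rangle$ whose eigenvalue $r_{i_1}\cdots r_{i_n}$ lies in the window $\bigl[2^{-n(S(\rho)+\varepsilon)},2^{-n(S(\rho)-\varepsilon)}\bigr]$. This is the analogue for $\rho^{\otimes n}$ of the projection $\Pi_\alpha$ built in \eqref{hiv2}.

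With this choice \eqref{hiv1} is automatic, since $\Pi\rho^{\otimes n}\Pi$ is diagonal in the chosen basis with every surviving eigenvalue at most $2^{-n(S(\rho)-\varepsilon)}$; and \eqref{t/1} is the classical law of large numbers for the i.i.d.\ random indices $(i_1,\ldots,i_n)$ drawn from the distribution $(r_1,\ldots,r_d)$, because $\Tr(\Pi\rho^{\otimes n})=\Prob\bigl((i_1,\ldots,i_n)\text{ is typical}\bigr)\to 1$. Both arguments are essentially re-runs of those given just before Remark \ref{kifgasolt4}.

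The real content is \eqref{t/3}. I would first treat the commuting case $[\rho,\sigma]=0$: diagonalise $\sigma=\sum_i s_i|e_i\rangle\langle e_i|$ in the same basis and set $p=(r_i)$, $q=(s_i)$. Then $D(\rho|\sigma)=D(p\|q)$ and the left-hand side of \eqref{t/3} is the classical $q^n$-mass of the $p$-typical set, which the standard elementary estimate
\begin{equation}
\sum_{(i_1,\ldots,i_n)\,p\text{-typical}} q_{i_1}\cdots q_{i_n}\le 2^{-n(D(p\|q)-\varepsilon)}
\end{equation}
dispatches at once.

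The main obstacle is that in general $\sigma^{\otimes n}$ does not commute with $\Pi$. My plan is to use pinching: from $\sigma^{\otimes n}\le v_n\,\mathcal{P}(\sigma^{\otimes n})$, where $\mathcal{P}$ is the pinching onto the spectral projections of $\rho^{\otimes n}$ and $v_n$ grows only polynomially in $n$, one has $\Tr(\Pi\sigma^{\otimes n}\Pi)\le v_n\Tr(\Pi\,\mathcal{P}(\sigma^{\otimes n}))$. Since $\mathcal{P}(\sigma^{\otimes n})$ commutes with both $\Pi$ and $\rho^{\otimes n}$, the commuting-case bound applies to it, and by the asymptotic identity $\tfrac1n D(\rho^{\otimes n}|\mathcal{P}(\sigma^{\otimes n}))\to D(\rho|\sigma)$ the exponent is preserved; the polynomial factor $v_n$ is then absorbed into the $\varepsilon$-slack on the exponent for $n$ large enough, recovering \eqref{t/3} with $D(\rho|\sigma)$ as stated.
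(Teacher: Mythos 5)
Your construction of $\Pi$ and your arguments for (\ref{t/1}) and (\ref{hiv1}) are fine and close in spirit to the paper's appendix (the paper uses frequency-typical rather than eigenvalue-window typical index sequences, but for these two properties either works). The gap is in (\ref{t/3}), and it is not a fixable detail: your $\Pi$ is a spectral projection of $\rho^{\otimes n}$, hence of the form $\sum_{\Vx\in T}|e_{x_1}\rangle\langle e_{x_1}|\otimes\cdots\otimes|e_{x_n}\rangle\langle e_{x_n}|$, so that $\Tr(\Pi\sigma^{\otimes n}\Pi)=\sum_{\Vx\in T}\prod_i q_{x_i}=q^n(T)$ with $q_a=\langle e_a|\sigma|e_a\rangle$. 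By Jensen's inequality for the operator-concave logarithm, $\langle e_a|\log\sigma|e_a\rangle\le\log q_a$, hence the classical divergence $D(p\|q)$ of the diagonal distributions is \emph{strictly} smaller than $D(\rho|\sigma)$ whenever $\rho$ and $\sigma$ fail to commute; and the converse part of the classical Stein lemma forces $q^n(T)\ge 2^{-n(D(p\|q)+o(1))}$ for any $T$ with $p^n(T)\ge 1-\varepsilon$. So no projection commuting with $\rho^{\otimes n}$ can satisfy (\ref{t/1}) and (\ref{t/3}) simultaneously once $\varepsilon<D(\rho|\sigma)-D(p\|q)$. Your pinching step does not escape this: since $\mathcal{P}$ pinches onto the eigenspaces of $\rho^{\otimes n}$ and $\Pi$ is a sum of those eigenprojections, $\Tr(\Pi\,\mathcal{P}(\sigma^{\otimes n}))=\Tr(\mathcal{P}(\Pi)\sigma^{\otimes n})=\Tr(\Pi\sigma^{\otimes n})$, so $\sigma^{\otimes n}\le v_n\mathcal{P}(\sigma^{\otimes n})$ only yields the vacuous $\Tr(\Pi\sigma^{\otimes n})\le v_n\Tr(\Pi\sigma^{\otimes n})$; the pinching proof of quantum Stein's lemma pinches $\rho^{\otimes n}$ with respect to the spectrum of $\sigma^{\otimes n}$, and the resulting test does \emph{not} commute with $\rho^{\otimes n}$. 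A secondary problem: even in the commuting case your ``standard elementary estimate'' is false for the eigenvalue-window typical set (take $p$ uniform: every sequence is entropy-typical and the $q$-mass of $T$ is $1$); that bound needs frequency (type) typicality or typicality of $\log(p^n/q^n)$.

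For comparison, the paper does not attempt to get all three properties from one $\rho$-diagonal projection. It builds a frequency-typical projection $\hat\Pi$ of $\rho^{\otimes n}$ giving (\ref{t/1}) and (\ref{hiv1}), imports from Hiai and Petz \cite{Petz} a second projection $\tilde\Pi$ (in general not commuting with $\rho^{\otimes n}$) satisfying (\ref{t/1}) and (\ref{t/3}), and takes the lattice meet $\Pi=\hat\Pi\wedge\tilde\Pi$. To repair your argument you would have to do the same: either cite the Hiai--Petz achievability result for (\ref{t/3}), or reprove it with a correctly oriented pinching test such as the spectral projection of $\mathcal{P}_{\sigma^{\otimes n}}(\rho^{\otimes n})-2^{n(D(\rho|\sigma)-\varepsilon)}\sigma^{\otimes n}$ onto its nonnegative part, and then intersect that projection with your entropy-typical one.
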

for proof see the appendix.
\begin{remark}
Note that, $\Pi$ does not depend on $\alpha_i$ the randomly chosen sequence, but $\tilde\Pi_{i-}$ does (across the article if an amount depend on the randomly generated sequence, then it will denoted by an $\alpha_i$ in the index or in the argument). This means that $\mathbb{E}(\Pi \rho_{\alpha_i}\Pi)=\Pi \rho^{\otimes n} \Pi$ while $\mathbb{E}(\rho_{\alpha_i})=\rho^{\otimes n}$, but $\mathbb{E}(\tilde\Pi_{i-} \rho_{\alpha_i}\tilde\Pi_{i-})\neq \tilde\Pi_{i-} \rho^{\otimes n} \tilde\Pi_{i-}$.
\end{remark}

\section{Reaching the Holevo bound with von Neumann measurement}\label{kodolas}

\begin{subsection}{Coding/Decoding}
First we generate $M = 2^{nR}$ random codewords with distribution $P$. 
These codewords are denoted by $\alpha_1, \alpha_2, \dots, \alpha_M$ and 
both the sender, and receiver are familiar with them. From these we generate 
quantum codewords.

A quantum codeword is a tensor product density defined in the following way:
If $\alpha_i(j) = k$, this is the $j$th symbol of the $i$th codeword, 
then the $j$th density of the $i$th tensor product is $\rrho_k$. The 
coding is as usual, we choose uniformly from the message set $\iM$ 
whose size is $M$ -- suppose this chosen message is $i$ --, and we send 
(or generate) the above defined quantum codeword for this message 
$\rrho_{\alpha_i(1)},\rrho_{\alpha_i(2)},\dots,\rrho_{\alpha_i(n)}$.
The quantum sequences go through the channel, the receiver gets 
the quantum sequence $\rho_{\alpha_i(1)},\rho_{\alpha_i(2)},\dots , 
\rho_{\alpha_i(n)}$ where $\rho_{\alpha_i(1)}=\iE(\rrho_{\alpha_i(1)})$.

Now we define a decoding algorithm which is nothing else than a POVM 
(von Neumann measurement). If the typical subspaces of $\rho_{\alpha_i}$ 
were orthogonal to each other, then - there would be no problem - we could 
make our POVM from $\Pi_{\alpha_i}$ (where $\Pi_{\alpha_i}$ defined in (\ref{hiv2})). However, this is not the case in general, 
and we have to orthogonalize them. We do this by a method, very similar to the 
Gram-Schmidt orthogonalisation method. 
In the first typical subspace let 
\begin{equation}
 \tilde\pi_{1}=span\{|\tilde s_{\alpha_1,k}\rangle: |\tilde s_{\alpha_1,k}\rangle=\Pi |s_{\alpha_1,k}\rangle ,\, |s_{\alpha_1,k}\rangle \in \pi_{\alpha_1}\}\label{hiv3}
\end{equation}
where $span\{\}$ means the subspace spanned by the the vectors in the curly bracket.

And for the $i$-th typical projections let
\begin{equation}
 \tilde\pi_{i}=span\left\{|\tilde s_{\alpha_1,k}\rangle: |\tilde s_{\alpha_1,k}\rangle=\Pi |s_{\alpha_i,k}\rangle-\sum_{j=1}^{i-1}\tilde\Pi_{j}\Pi |s_{\alpha_i,k}\rangle ,\, |s_{\alpha_i,k}\rangle \in \pi_{\alpha_i}\right\}\label{hiv4}
\end{equation}
(for defining subspaces to projections, and vice versa see Section \ref{Jel})

At the end of the procedure for all $s, t$ $\tilde\Pi_s$ is orthogonal to $\tilde\Pi_t$ $s \neq t$. We prove this, by total induction on $s,t$. Suppose that $s<t$. We see that for $s=1, t=2$ this is true because 
\begin{equation}
\tilde\Pi_1|\tilde s_{\alpha_2}\rangle=\tilde\Pi_1\Pi|s_{\alpha_2}\rangle - \tilde\Pi_1\tilde\Pi_1\Pi|s_{\alpha_2}\rangle=\tilde\Pi_1\Pi|s_{\alpha_2}\rangle - \tilde\Pi_1\Pi|s_{\alpha_2}\rangle=0 
\end{equation}
so $\tilde\pi_2\in Ker(\tilde\Pi_1)$.
 Suppose that, for all pairs $s',t'$, where $s'<s, t'<t$ the statement is true. then 
\begin{equation}
 \tilde\Pi_s|\tilde s_{\alpha_t}\rangle=\tilde\Pi_s\Pi|s_{\alpha_t}\rangle - \tilde\Pi_s\sum_{t'=1}^{t-1}\tilde\Pi_{t'}\Pi|s_{\alpha_2}\rangle=\tilde\Pi_s\Pi|s_{\alpha_t}\rangle - \tilde\Pi_s\Pi|s_{\alpha_t}\rangle=\underline 0 
\end{equation}
 because $\tilde\Pi_s\tilde\Pi_{t'}=0$ for $t'\neq s$ by the induction assumption.
Moreover the $\tilde\Pi_i$ and $\Pi$ are commutable operators (because every vector which spans the subspace of $\tilde\Pi_i$ projects onto is a member of $\pi$).

Our POVM (or our Von-Neumann measure) states from $\tilde\Pi_i$, plus we can make it complete adding an element of the POVM (labelled "error") on the remaining orthogonal subspace, if necessary, call these projection as $\tilde\Pi_{M+1}$. For index $i$, we can define the typical subspace of lesser indices as 
\[\tilde\Pi_{i-}=\sum_{j=1}^{i-1} \tilde\Pi_j\]

\end{subsection}

\begin{subsection}{The error probability of Decoding}
Now we show that the error probability is going to 0 if $n$ the block length goes to infinity. 
\begin{Tet}\label{basethm}
With these scheme, if $R < \chi(\rho)$ then for any $\gamma$ we can give a number $N$ such that if the length of the quantum codeword $n$ is longer than this number $n>N$ then average error probability is smaller than $\gamma$, provided that the blocklength $n$ is greater than $n_0(R,\gamma)$
\end{Tet}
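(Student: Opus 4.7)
The plan is to bound the random-code average error probability
\[
\bar P_e \;=\; \frac{1}{M}\sum_{i=1}^{M}\mathbb{E}\,\Tr\!\bigl((I-\tilde\Pi_i)\rho_{\alpha_i}\bigr)
\]
and show that $\bar P_e\to 0$ whenever $R<\chi$; the usual derandomisation then produces a deterministic code with small average (and, after expurgation, maximum) error. The key structural step is to split $I-\tilde\Pi_i$ into a piece depending only on $\alpha_i$ (``typicality loss'') and a piece that couples $\alpha_i$ to the earlier codewords (``collision''). Using the recursion (\ref{hiv4}) together with the commutativity of $\Pi$ and each $\tilde\Pi_j$ noted in Section \ref{kodolas}, I would derive an operator inequality of the form
\[
I-\tilde\Pi_i \;\preceq\; c_1(I-\Pi) \;+\; c_2\,\Pi(I-\Pi_{\alpha_i})\Pi \;+\; c_3\,\tilde\Pi_{i-}
\]
for some universal constants $c_1,c_2,c_3$; this plays here the role that the Hayashi--Nagaoka inequality plays for the pretty-good-measurement proofs.

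The two typicality terms are handled exactly as in the classical direct part. By Lemma \ref{eri=r} and property (\ref{t/1}) of Lemma \ref{tip-proj},
\[
\mathbb{E}\,\Tr\!\bigl((I-\Pi)\rho_{\alpha_i}\bigr)=\Tr\!\bigl((I-\Pi)\rho^{\otimes n}\bigr)\leq \varepsilon .
\]
The second term is controlled by the conditional typicality argument already set up in the excerpt: applying the law of large numbers to the i.i.d.\ pairs $(\alpha(j),s(k)_j)$, estimate (\ref{kifgasolt3}) gives $\mathbb{E}\,\Tr\bigl(\rho_{\alpha_i}(I-\Pi_{\alpha_i})\bigr)\leq 2\varepsilon$ for $n$ large.

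The collision term is the one genuinely new ingredient. For $j<i$, $\tilde\Pi_j$ is a function of $\alpha_1,\dots,\alpha_j$ only and is therefore independent of $\alpha_i$; Lemma \ref{eri=r} then yields
\[
\mathbb{E}\,\Tr(\tilde\Pi_j\rho_{\alpha_i})=\mathbb{E}\,\Tr(\tilde\Pi_j\,\rho^{\otimes n}).
\]
Since $\tilde\pi_j\subset\mathrm{Ran}(\Pi)$ we have $\tilde\Pi_j=\Pi\tilde\Pi_j\Pi$, while $\dim\tilde\pi_j\leq\dim\pi_{\alpha_j}\leq 2^{n(\S(\rho|\alpha)+\delta)}$ by (\ref{kifgasolt1}). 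Combined with the norm bound (\ref{hiv1}) this gives
\[
\Tr(\tilde\Pi_j\rho^{\otimes n})\leq\|\Pi\rho^{\otimes n}\Pi\|\,\Tr\tilde\Pi_j \leq 2^{-n(\S(\rho)-\varepsilon)}\,2^{n(\S(\rho|\alpha)+\delta)}=2^{-n(\chi-\varepsilon-\delta)}.
\]
Summing over the at most $M=2^{nR}$ earlier indices yields a total collision contribution $\leq 2^{n(R-\chi+\varepsilon+\delta)}$, which tends to $0$ as $n\to\infty$ whenever $R<\chi$ and $\varepsilon,\delta$ are chosen small enough.

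The main obstacle, and the only place where the argument really departs from the standard Holevo--Schumacher--Westmoreland proof, is the derivation of the operator inequality in the first paragraph: $\tilde\Pi_i$ is not of square-root form, so Hayashi--Nagaoka does not apply directly. I expect to obtain it by unfolding the Gram--Schmidt recursion (\ref{hiv4}), first using a gentle-measurement estimate to replace $\rho_{\alpha_i}$ by $\Pi_{\alpha_i}\rho_{\alpha_i}\Pi_{\alpha_i}$ at a cost of $O(\sqrt\varepsilon)$, and then tracking how orthogonalising the transported eigenvectors against $\tilde\Pi_{i-}$ degrades their overlap with $\rho_{\alpha_i}$; the commutativity of $\Pi$ with every $\tilde\Pi_j$ is what makes this bookkeeping tractable. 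Once the operator inequality is in place, everything else reduces to the routine bounds assembled above.
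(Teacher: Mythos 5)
Your peripheral estimates are all correct and coincide with the paper's: the first-moment computation $\mathbb{E}[\rho_{\alpha_i}]=\rho^{\otimes n}$ (Lemma \ref{eri=r}), the typicality losses via (\ref{t/1}) and (\ref{kifgasolt3}), the independence of $\tilde\Pi_{i-}$ from $\alpha_i$, and the collision count $\dim(\tilde\Pi_{i-})\cdot\|\Pi\rho^{\otimes n}\Pi\|\leq 2^{n(R+\S(\rho|\alpha)+\delta)}2^{-n(\S(\rho)-\varepsilon)}$ are exactly the ingredients of the paper's proof. But the step you defer to the last paragraph --- the operator inequality $I-\tilde\Pi_i\preceq c_1(I-\Pi)+c_2\Pi(I-\Pi_{\alpha_i})\Pi+c_3\tilde\Pi_{i-}$ --- is the entire non-routine content of the theorem, and you have not supplied it. As stated it is also the wrong target: what the Gram--Schmidt recursion actually yields is a \emph{quadratic-form} bound valid only on vectors of $\pi_{\alpha_i}$, not a linear operator inequality on the whole space with universal constants, and it is far from clear that the latter holds.

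What the paper proves instead (Lemma \ref{proj-lem}) is the scalar inequality
$\Tr(\tilde\Pi_i\rho_{\alpha_i}\tilde\Pi_i)\geq\bigl(\Tr(\Pi\tilde\rho_{\alpha_i}\Pi)-\Tr(\tilde\Pi_{i-}\Pi\tilde\rho_{\alpha_i}\Pi\tilde\Pi_{i-})\bigr)^2$,
where $\tilde\rho_{\alpha_i}=\Pi_{\alpha_i}\rho_{\alpha_i}\Pi_{\alpha_i}$ absorbs the restriction to typical eigenvectors. The mechanism is: the unnormalised Gram--Schmidt vector $|\tilde s_{\alpha_i,j}\rangle=(I-\tilde\Pi_{i-})\Pi|s_{\alpha_i,j}\rangle$ has length at most $1$, hence $\tilde\Pi_i\succeq|\tilde s_{\alpha_i,j}\rangle\langle\tilde s_{\alpha_i,j}|$ and
$\langle s_{\alpha_i,j}|\tilde\Pi_i|s_{\alpha_i,j}\rangle\geq\bigl(\langle s_{\alpha_i,j}|\Pi|s_{\alpha_i,j}\rangle-\langle s_{\alpha_i,j}|\tilde\Pi_{i-}\Pi|s_{\alpha_i,j}\rangle\bigr)^2$;
Jensen's inequality (convexity of $x\mapsto x^2$) then pulls the average over the typical eigenvalues $\lambda_{\alpha_i,j}$ inside the square. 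The square is not cosmetic: it propagates to the final bound $P_e\leq 1-(1-4\varepsilon)^2\leq 8\varepsilon$ and replaces the role that Hayashi--Nagaoka plays in the pretty-good-measurement proofs. Until you either prove your proposed operator inequality or replace it with a bound of this quadratic type, the argument has a genuine hole precisely where the von Neumann (rather than POVM) structure of the measurement has to be exploited.
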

To simplify the proof we use the following lemma.
\begin{Lem}\label{proj-lem}
The length of the projection of $|s_{\alpha_i,j}\rangle$ to $\tilde\Pi_i$ (element of our measuring POVM) can be underestimated - by two component where one is the projection to $\Pi$, and the other is orthogonal to the typical subspace of the lesser indices ($\tilde\Pi_{i-}$) - as follows
\begin{equation}\Tr(\tilde\Pi_i \rho_{\alpha_i} \tilde\Pi_i)\geq \left( \Tr (\Pi \tilde\rho_{\alpha_i} \Pi) -  \Tr ( \tilde\Pi_{i-} \Pi \tilde\rho_{\alpha_i}\Pi\, \tilde\Pi_{i-1}) \right)^2\end{equation}
(For definition of $\tilde\rho_{\alpha_i}$ see (\ref{hiv0}))
\end{Lem}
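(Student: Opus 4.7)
The plan is to pass from the quadratic expression $\Tr(\tilde\Pi_i\rho_{\alpha_i}\tilde\Pi_i)$ down to a linear trace via Hilbert--Schmidt Cauchy--Schwarz, and then to evaluate that linear trace using an operator identity that falls out of the Gram--Schmidt construction. Three steps in order: (i) replace $\rho_{\alpha_i}$ by $\tilde\rho_{\alpha_i}$, using $\rho_{\alpha_i}\geq\tilde\rho_{\alpha_i}$; (ii) apply Cauchy--Schwarz to get $\Tr(\tilde\Pi_i\tilde\rho_{\alpha_i}\tilde\Pi_i)\geq(\Tr(\tilde\Pi_i\tilde\rho_{\alpha_i}))^2$; (iii) show that this linear trace equals $\Tr(\Pi\tilde\rho_{\alpha_i}\Pi)-\Tr(\tilde\Pi_{i-}\Pi\tilde\rho_{\alpha_i}\Pi\tilde\Pi_{i-})$ on the nose.

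Step (i) is immediate: since $\Pi_{\alpha_i}$ is a spectral projection of $\rho_{\alpha_i}$, the two commute and $\tilde\rho_{\alpha_i}=\Pi_{\alpha_i}\rho_{\alpha_i}\Pi_{\alpha_i}\leq\rho_{\alpha_i}$ as operators, so conjugation by $\tilde\Pi_i$ and the trace preserve the inequality. For step (ii) I would write $\Tr(\tilde\Pi_i\tilde\rho_{\alpha_i})=\Tr(\tilde\rho_{\alpha_i}^{1/2}\cdot\tilde\Pi_i\tilde\rho_{\alpha_i}^{1/2})$ and apply Cauchy--Schwarz in the Hilbert--Schmidt inner product: this gives $(\Tr(\tilde\Pi_i\tilde\rho_{\alpha_i}))^2\leq\Tr(\tilde\rho_{\alpha_i})\cdot\Tr(\tilde\Pi_i\tilde\rho_{\alpha_i}\tilde\Pi_i)$, and since $\Tr(\tilde\rho_{\alpha_i})\leq 1$ the prefactor drops.

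Step (iii) is the heart of the matter and rests on the operator identity
\begin{equation*}
\tilde\Pi_i\,\Pi_{\alpha_i}=(I-\tilde\Pi_{i-})\,\Pi\,\Pi_{\alpha_i},
\end{equation*}
which I would extract from three facts already established: by definition (\ref{hiv4}), $(I-\tilde\Pi_{i-})\Pi|s_{\alpha_i,k}\rangle\in\tilde\pi_i$ for every typical $|s_{\alpha_i,k}\rangle$; since every spanning vector of $\tilde\pi_i$ lies in $\mathrm{range}(\Pi)$ one has $\Pi\tilde\Pi_i=\tilde\Pi_i$ (and dually $\tilde\Pi_{i-}\Pi=\tilde\Pi_{i-}$); and $\tilde\Pi_i\tilde\Pi_{i-}=0$ by the orthogonality already proved in the excerpt. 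Plugging the identity into $\Tr(\tilde\Pi_i\tilde\rho_{\alpha_i})=\Tr(\tilde\Pi_i\Pi_{\alpha_i}\rho_{\alpha_i}\Pi_{\alpha_i})$ and cleaning up with cyclicity of the trace yields exactly the combination asked for in the lemma.

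The delicate point, and the obstacle I would spend most time on, is step (iii): the Gram--Schmidt recipe only guarantees $\tilde\Pi_i=(I-\tilde\Pi_{i-})\Pi$ \emph{on $\pi_{\alpha_i}$}, not on the whole Hilbert space, and the projected vectors $(I-\tilde\Pi_{i-})\Pi|s_{\alpha_i,k}\rangle$ are not normalised, so one has to keep track of where each operator acts before collapsing traces. Once that identity is nailed down the Cauchy--Schwarz estimate and the operator-inequality reduction are both one-line routine calculations.
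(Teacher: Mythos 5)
Your proof is correct, and it reaches the lemma by a route that is genuinely different in execution from the paper's, even though the underlying geometry is the same. The paper works eigenvector by eigenvector: it expands $\Tr(\tilde\Pi_i\rho_{\alpha_i}\tilde\Pi_i)=\sum_j\lambda_{\alpha_i,j}\langle s_{\alpha_i,j}|\tilde\Pi_i|s_{\alpha_i,j}\rangle$, splits $\tilde\Pi_i$ for each typical $j$ as $a_j^2|\tilde s_{\alpha_i,j}\rangle\langle\tilde s_{\alpha_i,j}|+\hat\Pi_i$ with $a_j\geq 1$ the inverse length of $|\tilde s_{\alpha_i,j}\rangle$, drops the positive remainder and the atypical terms to obtain $\sum_{j\leq d(\alpha_i)}\lambda_{\alpha_i,j}\,|\langle\tilde s_{\alpha_i,j}|s_{\alpha_i,j}\rangle|^2$, and finally applies Jensen to pull the square outside the sum. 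Your version packages the per-vector overlap computation into the single operator identity $\tilde\Pi_i\Pi_{\alpha_i}=(I-\tilde\Pi_{i-})\Pi\,\Pi_{\alpha_i}$ and replaces Jensen by Hilbert--Schmidt Cauchy--Schwarz. The identity you flag as delicate does hold, and the three facts you list are exactly what is needed: for $v\in\pi_{\alpha_i}$ one has
\begin{equation*}
v-(I-\tilde\Pi_{i-})\Pi v=(I-\Pi)v+\tilde\Pi_{i-}\Pi v ,
\end{equation*}
and $\tilde\Pi_i$ annihilates both summands (the first because $\tilde\Pi_i\Pi=\tilde\Pi_i$, the second because $\tilde\Pi_i\tilde\Pi_{i-}=0$), while $(I-\tilde\Pi_{i-})\Pi v\in\tilde\pi_i$ by the definition of $\tilde\pi_i$ as a span; since the orthogonal projection $\tilde\Pi_i v$ is characterised by lying in $\tilde\pi_i$ with $v-\tilde\Pi_i v\perp\tilde\pi_i$, the identity follows on $\pi_{\alpha_i}$, and that restriction is all you use because it is always composed with $\Pi_{\alpha_i}$ (equivalently with $\tilde\rho_{\alpha_i}=\Pi_{\alpha_i}\tilde\rho_{\alpha_i}$); the non-normalisation of the spanning vectors is irrelevant to this characterisation. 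One further simplification you may have missed: since $\Tr(\tilde\Pi_i\tilde\rho_{\alpha_i}\tilde\Pi_i)=\Tr(\tilde\Pi_i\tilde\rho_{\alpha_i})$ by cyclicity and idempotence, your Cauchy--Schwarz step collapses to the elementary fact that $x\geq x^2$ for $x\in[0,1]$ (here $0\leq\Tr(\tilde\Pi_i\tilde\rho_{\alpha_i})\leq\Tr(\tilde\rho_{\alpha_i})\leq 1$). What your route buys is that the linear trace equals the bracketed combination in the lemma exactly, and it avoids the informal points in the paper's write-up (the unstated $j$-dependence of $\hat\Pi_i$, the normalisation constants $a_j$, and the silent discarding of atypical eigenvectors); what the paper's route buys is that it needs nothing beyond the elementary bound $\langle s|\tilde\Pi_i|s\rangle\geq|\langle u|s\rangle|^2$ for a unit vector $u$ in the range of $\tilde\Pi_i$.
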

\begin{proof}
\begin{equation}\Tr(\tilde\Pi_i \rho_{\alpha_i} \tilde\Pi_i)=\Tr(\tilde\Pi_i\rho_{\alpha_i} )=\Tr(\tilde\Pi_i \sum_{j=1}^{d}\lambda_{\alpha_i,j} |s_{\alpha_i,j}\rangle\langle s_{\alpha_i,j}|)=\end{equation}
Now we decompose the projection $\tilde\Pi_i$ depend on $j$ as follows
\begin{equation}\tilde\Pi_i=a^2_j|\tilde s_{\alpha_i,j}\rangle \langle \tilde s_{\alpha_i,j}| + \hat{\Pi_i}\end{equation}
where $a_j>1$ is the reciprocal of the length of $|\tilde s_{\alpha_i,j}\rangle =\Pi |s_{\alpha_i,k}\rangle-\sum_{j=1}^{i-1}\tilde\Pi_{j}\Pi |s_{\alpha_i,k}\rangle$ (length of $|\tilde s_{\alpha_i,j}\rangle$ is smaller than 1 because it is an orthogonal component of a unit length vector $|s_{\alpha_i,j}\rangle$). So we continue the first row by
\begin{align}
&=\sum_{j=1}^{d}\lambda_{j} \Tr( a^2_j|\tilde s_{\alpha_i,j}\rangle \langle \tilde s_{\alpha_i,j}||s_{\alpha_i,j}\rangle\langle s_{\alpha_i,j}| + \hat\Pi_i |s_{\alpha_i,j}\rangle\langle s_{\alpha_i,j}|)=\\
&\sum_{j=1}^{d}\lambda_{j} \Tr( a^2_j|\tilde s_{\alpha_i,j}\rangle \langle \tilde s_{\alpha_i,j}||s_{\alpha_i,j}\rangle\langle s_{\alpha_i,j}|)+
\sum_{j=1}^{d}\lambda_{j} \Tr(\langle s_{\alpha_i,j}| \hat\Pi_i |s_{\alpha_i,j}\rangle)\geq  \\
&\geq \sum_{j=1}^{d(\alpha_i)}\lambda_{j} \Tr(|\tilde s_{\alpha_i,j}\rangle \langle \tilde s_{\alpha_i,j}||s_{\alpha_i,j}\rangle\langle s_{\alpha_i,j}|)= \label{kifgasolt2}\\
&=\sum_{j=1}^{d(\alpha_i)}\lambda_{j} \Tr((\Pi |s_{\alpha_i,j}\rangle-\tilde\Pi_{i-}\Pi| s_{\alpha,j}\rangle) (\langle s_{\alpha_i,j}|\Pi -\langle s_{\alpha_i,j}|\Pi\, \tilde\Pi_{i-} )|s_{\alpha_i,j}\rangle\langle s_{\alpha_i,j}|)=\\
&=\sum_{j=1}^{d(\alpha_i)}\lambda_{j} \left(\langle s _{\alpha_i,j}|(\Pi |s_{\alpha_i,j}\rangle-\Pi_{i-}\Pi| s_{\alpha,j}\rangle) \right)^2
\end{align}
Where (\ref{kifgasolt2}) comes from the fact, that $a_j>1$ so $a_j^2>1$. From the Jensen's inequality
\begin{align}
&\geq \left(\sum_{j=1}^{d(\alpha_i)}\lambda_{j} \langle s _{\alpha_i,j}|(\Pi |s_{\alpha_i,j}\rangle-\tilde\Pi_{i-}\Pi| s_{\alpha,j}\rangle) \right)^2=\\
&=\left(\sum_{j=1}^{d(\alpha_i)}\lambda_{j} \langle s_{\alpha_i,j}|\Pi |s_{\alpha_i,j}\rangle - \sum_{j=1}^{d(\alpha_i)}\lambda_{j} \langle s_{\alpha_i,j}| \tilde\Pi_{i-} \Pi |s_{\alpha_i,j}\rangle \right)^2=\\
&=\left(\sum_{j=1}^{d(\alpha_i)}\lambda_{j} \Tr (\Pi |s_{\alpha_i,j}\rangle \langle s_{\alpha_i,j}| - \sum_{j=1}^{d(\alpha_i)}\lambda_{j} \Tr (\tilde\Pi_{i-} \Pi |s_{\alpha_i,j}\rangle \langle s_{\alpha_i,j}|) \right)^2=\\
&=\left( \Tr (\Pi \tilde\rho_{\alpha_i}) -  \Tr (\tilde\Pi_{i-} \Pi  \tilde\rho_{\alpha_i}) \right)^2
\geq \left( \Tr (\Pi \tilde\rho_{\alpha_i} \Pi) -  \Tr ( \tilde\Pi_{i-} \Pi \tilde\rho_{\alpha_i} \Pi\,\tilde\Pi_{i-}) \right)^2
\end{align}
\end{proof}
Now see the proof: \\
\begin{proof}[Proof of Theorem \ref{basethm}]
Let $\varepsilon$ is such small that $R+\varepsilon < \chi(\rho) $, and $8\varepsilon\leq \gamma$ and let $n$ such large that all the Lemmas and definitions from Section \ref{Jel} with $\varepsilon$ are true.
\begin{align}
P_e&=\mathbb{E}[1-\frac{1}{M}\sum_{i=1}^M \Tr(\tilde\Pi_i\rho_{\alpha_i}\tilde\Pi_i)]\leq\\
&\leq \frac{1}{M}\sum_{i=1}^M \mathbb{E}[1-\Tr(\tilde\Pi_i\rho_{\alpha_i} \tilde\Pi_i)]\leq \\
&\leq \frac{1}{M}\sum_{i=1}^M \mathbb{E}[1-\Tr(\tilde\Pi_i\rho_{\alpha_1} \tilde\Pi_i)]
\end{align}
We use the Lemma
\begin{align}
P_e\leq \frac{1}{M}\sum_{i=1}^M 1-\mathbb{E}[\Tr(\Pi \tilde\rho_{\alpha_i} \Pi - \tilde\Pi_{i-} \Pi \tilde\rho_{\alpha_i} \Pi \tilde\Pi_{i-} )]^2
\end{align}
We use the Jensen's inequality
\begin{align}
P_e\leq \frac{1}{M}\sum_{i=1}^M 1-[\mathbb{E}\Tr(\Pi \tilde\rho_{\alpha_i} \Pi - \tilde\Pi_{i-} \Pi \tilde\rho_{\alpha_i} \Pi \tilde\Pi_{i-} )]^2=\\
=\frac{1}{M}\sum_{i=1}^M 1-[\Tr(\Pi \mathbb{E}(\tilde\rho_{\alpha_i}) \Pi -  \tilde\Pi_{i-} \Pi \mathbb{E}(\tilde\rho_{\alpha_i}) Pi \tilde\Pi_{i-} )]^2=\\
\frac{1}{M}\sum_{i=1}^M 1-[\Tr(\Pi \tilde\rho \Pi) - \mathbb{E} (\Tr( \tilde\Pi_{i-} \Pi \tilde\rho_{\alpha_i} \Pi \tilde\Pi_{i-} ))]^2
\end{align}
From (\ref{kifgasolt3}), and remark \ref{kifgasolt4}, we know that $\Tr(\tilde\rho)\geq 1-2\varepsilon$ and $\rho\geq \tilde\rho$ so, for $\Delta=\rho-\tilde\rho$, $\Tr(\Delta)\leq 2\varepsilon$. With this
\begin{equation}\Tr(\Pi \tilde\rho \Pi)=\Tr(\Pi \rho \Pi)- \Tr (\Pi (\rho-\tilde\rho) \Pi)\geq 1-\varepsilon- \Tr(\Delta) \label{hullamisnagy} \end{equation}
So the error probability
\begin{align}
P_e\leq \frac{1}{M}\sum_{i=1}^M 1-[1-3\varepsilon - \mathbb{E}(\Tr( \tilde\Pi_{i-} \Pi \tilde\rho_{\alpha_i} \Pi \tilde\Pi_{i-} ))]^2\leq\\
\leq \frac{1}{M}\sum_{i=1}^M 1-[1-3\varepsilon -  \mathbb{E}(\Tr(\tilde\Pi_{i-} \Pi \tilde\rho_{\alpha_i} \Pi \tilde\Pi_{i-} ))]^2
\end{align}
Now we analyse the last term
\begin{align}
\mathbb{E}(\Tr( \tilde\Pi_{i-} \Pi \tilde\rho_{\alpha_i} \Pi \tilde\Pi_{i-} ))=\mathbb{E}(\Tr( \tilde\Pi_{i-} \Pi \tilde\rho_{\alpha_i} \Pi ))=\mathbb{E}( \Tr(\sum_{j=1}^{dim(\tilde\Pi_{i-})}|b_
j\rangle \langle b_j| \Pi \tilde\rho_{\alpha_i} \Pi ))=\\
\mathbb{E}(\sum_{i=1}^{dim(\tilde\Pi_{i-})} \langle b_j|\Pi \tilde\rho_{\alpha_i} \Pi |b_j\rangle)
\end{align}
Now we have to evaluate $dim(\tilde\Pi_{i-})$. Because of (\ref{kifgasolt1})
\begin{equation}dim(\tilde\Pi_{i-})\leq \sum_{j=2}^M dim(\Pi_{\alpha_j})\leq M2^{nS(\rho|\alpha)}=2^{n (R+S(\rho|\alpha))}\end{equation}
\begin{align}
\mathbb{E}(\Tr( \tilde\Pi_{i-} \Pi \tilde\rho_{\alpha_i} \Pi \tilde\Pi_{i-} ))\leq 2^{n (R+\S(\rho|\alpha))}\max_j \mathbb{E}(\langle b_j|\Pi \tilde\rho_{\alpha_i} \Pi |b_j\rangle )=\\
2^{n (R+\S(\rho|\alpha))}\max_j \langle b_j|\Pi \tilde\rho \Pi |b_j\rangle \leq 2^{n (R+\S(\rho|\alpha))}\max_j\langle  b_j|\Pi \rho \Pi |b_j\rangle
\end{align}
Because $b_j$'s are unit vectors, from (\ref{hiv1}) we know that $\max_j\langle  b_j|\Pi \rho \Pi |b_j\rangle\leq 2^{-n(\S(\rho)-\varepsilon)}$, so
\begin{equation}\mathbb{E}(\Tr( \tilde\Pi_{i-} \Pi \tilde\rho_{\alpha_i} \Pi \tilde\Pi_{i-} ))\leq 2^{n (R+\S(\rho|\alpha))}2^{-n(\S(\rho)-\varepsilon)}\leq 2^{-n(\S(\rho)-\S(\rho|\alpha)-R-\varepsilon)}\end{equation}
Now we can see that $\S(\rho)-\S(\rho|\alpha)=\chi(\rho)$, and we assumed that $R+\varepsilon <\chi(\rho)$, so the exponent is negative. If $n$ is large enough then the whole expression is less than $\varepsilon$.
\begin{equation}\mathbb{E}(\Tr( \tilde\Pi_{i-} \Pi \tilde\rho_{\alpha_i} \Pi \tilde\Pi_{i-} ))\leq \varepsilon\end{equation}
So the error probability is smaller than
\begin{equation}P_e\leq 1-[1-4\varepsilon]^2=8\varepsilon-16\varepsilon^2\leq 8\varepsilon\end{equation}
\end{proof}
\end{subsection}

\begin{section}{I. Use:\\ Coding for finite compound channel}\label{Fehaszn1}
First we give a definition of the compound channel. Suppose there is a given set of channels $\irott{S}$. We want a predefined coding scheme to code our message with the following disturbing effect: suppose there is an enemy who chose one channel from the set after we generated our quantum codeword. Now our quantum codeword are send through the chosen channel (so all the quantum has the same effect but we cannot say which).

This is a more realistic model than a simple quantum channel, we know what effects can destroy our quantum codewords,but we cannot know at the present moment, which effect is active.
This compound channel is a finite compound channel, if the set $\irott{S}$ is finite.
Define:
\begin{align}
\chi(\irott{S},P,\rrho_1^l)\circeq&\min_{\iE\in \irott{S}}\chi(\iE,P,\rrho_1^l)\\
\chi(\irott{S})\circeq&\max_{P,\rrho_1^l}\chi(\irott{S},P,\rrho_1^l)\label{compcap}
\end{align}
\begin{Tet}
The classical capacity $C$ of the finite compound channel is $\chi(\irott{S})$, This means, if $R<\chi(\irott{S})$ then for any $\gamma$ there exist a number $N(\gamma,R,|\irott{S}|)$ that if the length of the quantum codeword $n$ is larger than that number $n>N(\gamma)$ then the error probability is smaller than $\gamma$.
\end{Tet}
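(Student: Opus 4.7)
The plan is to upgrade the single-channel construction of Section~\ref{kodolas} so that the same codebook and the same POVM decode reliably regardless of which channel in $\irott{S}$ is actually used. Fix a pair $(P^*,\rrho_1^l)$ achieving the outer maximum in~(\ref{compcap}), so that $\chi(\iE_s,P^*,\rrho_1^l)\geq \chi(\irott{S})$ for every $\iE_s\in\irott{S}$, and generate the random codewords $\alpha_1,\dots,\alpha_M$ with $M=2^{nR}$ exactly as in Theorem~\ref{basethm}. The codebook is shared across channels; only the output states $\rho_i^{(s)}=\iE_s(\rrho_i)$, the conditionally typical subspaces $\pi_{\alpha_i}^{(s)}$ and the typical projections $\Pi^{(s)}$ produced by Lemma~\ref{tip-proj} depend on~$s$.

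Next I would construct a joint decoding POVM by running the orthogonalisation procedure~(\ref{hiv3})--(\ref{hiv4}) on the enlarged list of subspaces $\{\pi_{\alpha_i}^{(s)}:1\leq i\leq M,\ \iE_s\in\irott{S}\}$. Enumerate the $|\irott{S}|\cdot M$ pairs $(s,i)$ in some fixed order and, at step $(s,i)$, use $\Pi^{(s)}$ in place of the single $\Pi$; the output is a family of mutually orthogonal projections $\tilde\Pi_{(s,i)}$. The POVM element for message $i$ is then
\[
F_i=\sum_{s=1}^{|\irott{S}|}\tilde\Pi_{(s,i)},
\]
completed by an ``error'' element on the residual subspace. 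Intuitively every message reserves a slot inside the conditionally typical subspace of each potential channel.

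The error analysis is performed channel by channel. If the adversary selects $\iE_{s^*}\in\irott{S}$, then for the transmitted message $i$ we have $\Tr(F_i\rho_{\alpha_i}^{(s^*)})\geq \Tr(\tilde\Pi_{(s^*,i)}\rho_{\alpha_i}^{(s^*)})$, so Lemma~\ref{proj-lem} applies verbatim with $\Pi=\Pi^{(s^*)}$ and $\tilde\Pi_{i-}=\sum_{(s,j)<(s^*,i)}\tilde\Pi_{(s,j)}$. The rest of the computation copies the proof of Theorem~\ref{basethm} with two bookkeeping changes: the dimension bound~(\ref{kifgasolt1}) is summed over the enlarged past, and the typical-subspace concentration~(\ref{hiv1}) is applied for the specific channel~$s^*$. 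The extra factor $|\irott{S}|$ in the dimension count contributes an additive $\log|\irott{S}|/n$ to the exponent, which vanishes because $|\irott{S}|$ is fixed, and since $R<\chi(\irott{S})\leq\chi(\iE_{s^*},P^*,\rrho_1^l)$ the resulting exponent stays strictly negative. Taking $\varepsilon$ small enough delivers an error probability $\leq 8\varepsilon\leq\gamma$ for every $s^*\in\irott{S}$ simultaneously.

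The main obstacle I anticipate is keeping this exponent as tight as $\chi(\iE_{s^*},P^*,\rrho_1^l)-R$. The bound on $\dim\tilde\Pi_{(s^*,i)-}$ now mixes contributions coming from all $|\irott{S}|$ channels, and handled naively one picks up $\max_s\S(\rho^{(s)}|\alpha)$ instead of $\S(\rho^{(s^*)}|\alpha)$, giving the strictly weaker target rate $\min_{s^*}\S(\rho^{(s^*)})-\max_s\S(\rho^{(s)}|\alpha)$. I would close the gap by ordering the Gram--Schmidt channel-block by channel-block, so that only the $s^*$-block contributes $\S(\rho^{(s^*)}|\alpha)$, and by showing that the cross-channel terms $\Tr(\tilde\Pi_{(s,j)}\Pi^{(s^*)}\tilde\rho_{\alpha_i}^{(s^*)}\Pi^{(s^*)})$ with $s\neq s^*$ are negligible because $\Pi^{(s)}$ and $\Pi^{(s^*)}$ are almost orthogonal on the support of $\tilde\rho^{(s^*)}$ (a quantum echo of the classical ``atypical under the wrong channel'' phenomenon). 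Apart from this subtlety the argument is a verbatim repetition of the proof of Theorem~\ref{basethm} with the added bookkeeping for $|\irott{S}|$ channels.
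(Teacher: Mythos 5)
Your construction takes a genuinely different route from the paper, and the place where the two diverge is exactly where your argument has a gap that you name but do not close. The paper does \emph{not} build one joint Gram--Schmidt family over all (channel, message) pairs. It decodes in two stages: first it \emph{identifies the average output state} by a sequence of binary von Neumann measurements $\{\Pi^k,(\Pi^k)^c\}$, where the $\Pi^k$ are intersections of the projections supplied by Lemma~\ref{tip-proj}, and it is precisely the third property~(\ref{t/3}), $\Tr(\Pi^i(\rho^j)^{\otimes n}\Pi^i)\leq 2^{-n(D(\rho^i\|\rho^j)-\varepsilon)}$, that makes misidentification exponentially unlikely; only then, conditioned on the identified state $\rho^k$ (and keeping track of the measurement back-action through the operators $P^k=\Pi^k(\Pi^{k-1})^c\cdots(\Pi^1)^c$), does it run the Gram--Schmidt decoder of Section~\ref{kodolas} on the \emph{union} $\check\pi_i$ of the conditionally typical subspaces of the channels sharing that average output. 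Because channels sharing an average output have equal $S(\rho^k)$, the dimension count in that second stage only costs a factor $a$, and the exponent $S(\rho^k)-S(\rho^k|\alpha)-R$ stays negative.

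Your single-stage decoder, as you yourself observe, naively yields the exponent $R+\max_s S(\rho^{(s)}|\alpha)-S(\rho^{(s^*)})$, i.e.\ the rate $\min_{s^*}S(\rho^{(s^*)})-\max_s S(\rho^{(s)}|\alpha)$, which can be strictly below $\chi(\irott{S})$. The fix you sketch --- order the blocks by channel and argue that the cross-channel terms $\Tr(\tilde\Pi_{(s,j)}\Pi^{(s^*)}\tilde\rho^{(s^*)}_{\alpha_i}\Pi^{(s^*)}\tilde\Pi_{(s,j)})$ with $s\neq s^*$ are negligible because $\Pi^{(s)}$ and $\Pi^{(s^*)}$ are ``almost orthogonal'' --- is precisely the missing hard step, and it is not automatic. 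After orthogonalisation the range of $\tilde\Pi_{(s,j)}$ is no longer contained in the range of $\Pi^{(s)}$ (the subtracted Gram--Schmidt components live in the earlier blocks), so near-orthogonality of $\Pi^{(s)}$ and $\Pi^{(s^*)}$ does not directly control these traces; moreover, when two channels have the same average output but different conditional states, $\Pi^{(s)}$ and $\Pi^{(s^*)}$ are not almost orthogonal at all, and that case must be handled as the paper does, by merging those channels' conditionally typical subspaces before orthogonalising. Any quantitative version of your ``atypical under the wrong channel'' estimate will have to invoke the relative-entropy property~(\ref{t/3}) of Lemma~\ref{tip-proj}, which your proposal never uses, and at that point you have essentially reconstructed the paper's channel-identification stage. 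So the skeleton is reasonable, but as written the argument establishes only the weaker rate, and the claimed repair is an assertion rather than a proof.
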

\begin{proof}
First we can assume that $|\irott S|>2$, because for $|\irott S|=1$, the proof is the same as theorem \ref{basethm}. The inequality $C \leq \chi(\irott{S})$ is simple. Because in \cite{Holevo} was shown, that a quantum system can carry $\chi(\iE,P,\rrho_1^l)$ bit information if the sender code his message to quantum states $\iE(\rrho_i)$ with a priori probabilities $P$. If the sender has no knowledge, which channel is being used, then the sender can only codes his message to states $\rrho_i$ with a priori distribution $P$. Then if his enemy chose the worst channel for these schema, the carried information cannot be larger than \[\max_{P,\rrho_1^l} \min_{\iE\in \irott{S}} \chi(\iE,P, \rrho_1^l) = \chi(\irott{S}).\] To prove that $C \geq \chi(\irott{S})$ we show a coding scheme which rate can reach the capacity.

\begin{subsection}{Coding for finite compound channel}
Let $|\irott{S}| = a$, $\irott{S}=\iE_1,\iE_2,\dots,\iE_a$ fixed. Let $P$, and $\rrho_1, \rrho_2, \dots, \rrho_l$ be the probability distribution and quantum's that reach the maximum in (\ref{compcap}). The sender codes his message to randomly generated codeword as in section \ref{kodolas}. The codeword goes into the quantum channel $\iE_o$ $o\in \{1,2,\dots,a\}$ which was chosen by our enemy.
Denote $\rrho=\sum_{j=1}^l p_l \rrho_l$ the input mixed state, and denote $\rho^k=\iE_k^{\otimes n}(\rrho^{\otimes n}),\, 1\leq k \leq a$ the possible mixed output of the channel. Similarly  denote $\rho^k_{\alpha_i}=\iE_k(\rho_{\alpha_i})$ the possible output of the $i$-th quantum codeword. To simplify our proof we can assume that, the order of the set of quantum channels is such that, the first $\bar{a}$ $\rho^k, k \in \{1,2,\dots,\bar{a}\}$ is different.
\end{subsection}
\begin{subsection}{Decoding for finite compound channel}
Decoding is done by two steps:
In first step we can detect which mixed state we have. In the second step we detect the message.

See the first step. Let denote $\bar{a}$ the number of the different output mixed states.
To distinguish the output mixed states we will use our Lemma  \ref{tip-proj}. Let $\varepsilon$ be such small that $\varepsilon< \min_{i\neq j 1\leq i,j\leq \bar{a} } \frac{D(\rho^i\|\rho^j)}{2}$, $\varepsilon\leq \frac{\gamma}{8a}$ and $R+\varepsilon<\chi(\irott{S})$. Then for an $n$ large enough for every $i,j: i\neq j 1\leq i,j \leq \bar{a}$ Lemma \ref{tip-proj} is true, with $\rho=\rho^l,\sigma=\rho^k$ and get $\Pi^{i,j}$ (if $n$ is greater than $\max(N(\rho^l,\rho^k,\varepsilon))$). Now we make a typical projection for all $i$ as follows:
\begin{equation}\Pi^i=\bigwedge_{1\leq j\leq \bar{a}} \Pi^{i,j}\end{equation}
See, that
\begin{align}
 \Tr(\Pi^i(\rho^i)^{\otimes n}\Pi^i)=&1-\Tr((\Pi^i)^c(\rho_i)^{\otimes n}(\Pi^i)^c)\geq\\
\geq& 1-\sum_{j\neq i,1\leq j \leq \bar{a}} \Tr((\Pi^{i,j})^c(\rho_i)^{\otimes n}(\Pi^{i,j})^c)\geq 1-a\varepsilon\\
 \|\Pi^i (\rho^i)^{\otimes n} \Pi^i\|\leq&2^{-n(S(\rho_i)-\varepsilon)}\\
 \Tr(\Pi^i (\rho^j)^{\otimes n} \Pi^i) \leq& 2^{-n(D(\rho^i\|\rho^j)-\varepsilon)}
\end{align}
for all $1 \leq i,j \leq a $.

We detect the mixed state as follows:
First we give a sequence of measures. Our first POVM measure states from $\Pi^1,(\Pi^1)^c$ if we measure $\Pi^1$ we know, that our mixed state was $\rho^1$ so we stop, if we measured $(\Pi^1)^c$ then we measure again. Second POVM measure states from $\Pi^2,(\Pi^2)^c$, etc. With this we can differentiate the possible $\bar{a}$ type of our mixed state.

We suppose that our enemy chosen the channel $\iE_o$ which generate the $k$-th mixed state (this means $\iE_l(
\rrho)=\iE_o(\rrho)$), if $k \neq 1$ then the first measure is good, if we measure the second projection. This probability is
\begin{equation}\Tr((\Pi^1)^c \rho^k_{\alpha_i} (\Pi^1)^c)\end{equation}
and our quantum states will be
\begin{equation}\frac{(\Pi^1)^c \rho^k_{\alpha_i} (\Pi^1)^c}{\Tr((\Pi^1)^c \rho^k_{\alpha_i} (\Pi^1)^c)}\end{equation}
Next if $k\neq 2$ then the next measure is good if we measure the second projection this probability is
\begin{equation}\Tr((\Pi^2)^c\frac{(\Pi^1)^c \rho^k_{\alpha_i} (\Pi^1)^c}{\Tr((\Pi^1)^c \rho^k_{\alpha_i} (\Pi^1)^c)}(\Pi^2)^c)\end{equation}
and our state becomes
\begin{equation}\frac{(\Pi^2)^c\frac{(\Pi^1)^c \rho^k_{\alpha_i} (\Pi^1)^c}{\Tr((\Pi^1)^c \rho^k_{\alpha_i} (\Pi^1)^c)}(\Pi^2)^c}{\Tr((\Pi^2)^c\frac{(\Pi^1)^c \rho^k_{\alpha_i} (\Pi^1)^c}{\Tr((\Pi^1)^c \rho^k_{\alpha_i} (\Pi^1)^c)}(\Pi^2)^c)}=\frac{(\Pi^2)^c(\Pi^1)^c \rho^k_{\alpha_i} (\Pi^1)^c(\Pi^2)^c}{\Tr(\Pi^2)^c(\Pi^1)^c \rho^k_{\alpha_i} (\Pi^1)^c(\Pi^2)^c)}\end{equation}
And the probability, that we don't made error through the first, and the second step is
\begin{align}\Tr\left(\frac{(\Pi^2)^c(\Pi^1)^c \rho^k_{\alpha_i} (\Pi^1)^c(\Pi^2)^c}{\Tr((\Pi^1)^c \rho^k_{\alpha_i} (\Pi^1)^c)}\right)&\Tr((\Pi^1)^c \rho^k_{\alpha_i} (\Pi^1)^c)=\\
 &\Tr((\Pi^2)^c(\Pi^1)^c \rho^k_{\alpha_i} (\Pi^1)^c(\Pi^2)^c)
\end{align}
From this we can see that, if $\iE_o(\rrho)=\iE_k(\rrho)$ then the probability that we detect our mixed state correctly is:
\begin{align}
 \mathbb{E}[\Tr(\Pi^k (\Pi^{k-1})^c\dots (\Pi^{2})^c (\Pi^{1})^c  \rho^k_{\alpha_i} (\Pi^{1})^c (\Pi^{2})^c \dots (\Pi^{k-1})^c \Pi^k)]
 \end{align}
Let $\delta=a\varepsilon$. From Lemma \ref{eri=r}, and from the definition of $\Pi^k$ we know
\begin{align}
 \mathbb{E}[\Tr(\Pi^k  \rho^k_{\alpha_i}  \Pi^k)] =\Tr(\Pi^k  \rho^k  \Pi^k)\geq 1-\delta
\end{align}
Moreover
\begin{align}
 1-\delta &\leq \Tr(\Pi^k  \rho^k  \Pi^k)\leq \Tr(\Pi^k (\Pi^{1})^c \rho^k (\Pi^{1})^c \Pi^k)+ \Tr(\Pi^k \Pi^{1} \rho^k \Pi^{1} \Pi^k)\leq \\
&\leq \Tr(\Pi^k (\Pi^{1})^c \rho^k (\Pi^{1})^c \Pi^k)+ \Tr(\Pi^{1} \rho^k \Pi^{1} )\\
&\leq \Tr(\Pi^k (\Pi^{1})^c \rho^k (\Pi^{1})^c \Pi^k)+ 2^{-nD(\rho^{1}\|\rho^k)-\varepsilon}=\\
&=\Tr(\Pi^k (\Pi^{2})^c (\Pi^{1})^c  \rho^k (\Pi^{1})^c (\Pi^{2})^c \Pi^k)+ \notag\\
&+\Tr(\Pi^k \Pi^{2} (\Pi^{1})^c \rho^k (\Pi^{1})^c \Pi^{2} \Pi^k)+2^{-nS(\rho^{1}|\rho^k)-\varepsilon}\leq\\
&\leq \Tr(\Pi^k (\Pi^{2})^c (\Pi^{1})^c  \rho^k (\Pi^{1})^c (\Pi^{2})^c \Pi^k)+\Tr( \Pi^{2} \rho^k \Pi^{2} )+2^{-nD(\rho^{1}\|\rho^k)-\varepsilon}\leq \\
&\leq \Tr(\Pi^k (\Pi^{2})^c (\Pi^{1})^c  \rho^k (\Pi^{1})^c (\Pi^{2})^c \Pi^k)+2^{-nD(\rho^{2}\|\rho^k)-\varepsilon}+2^{-nD(\rho^{1}\|\rho^k)-\varepsilon}\leq\\
& \leq  \dots \leq \Tr(\Pi^k (\Pi^{k-1})^c\dots (\Pi^{2})^c (\Pi^{1})^c  \rho^k_{\alpha_i} (\Pi^{1})^c (\Pi^{2})^c \dots (\Pi^{k-1})^c \Pi^k)+ \notag\\
&+\sum_{l=1}^{k-1} 2^{-nD(\rho^{l}\|\rho^k)-\varepsilon}
\end{align}
This means that
\begin{align}
& 1-\delta-\sum_{l=1}^{k-1} 2^{-nD(\rho^{l}\|\rho^k)-\varepsilon}\\
& \leq \mathbb{E}[\Tr(\Pi^k (\Pi^{k-1})^c\dots (\Pi^{2})^c (\Pi^{1})^c  \rho^k_{\alpha_i} (\Pi^{1})^c (\Pi^{2})^c \dots (\Pi^{k-1})^c \Pi^k)]
\end{align}
See, that
\begin{align}
1-\delta-\bar{a}\min_{1\leq l<k} 2^{-nD(\rho^{l}\|\rho^k)-\varepsilon} \leq 1-\delta-\sum_{l=1}^{k-1} 2^{-nD(\rho^{l}\|\rho^k)-\varepsilon}
\end{align}
Because $\varepsilon< \frac{S(\rho_i|\rho_j)}{2} $ this means that if $n$ is enough large then
\begin{equation}\bar{a}\min_{l\in \{1,2,\dots,k\}} 2^{-nS(\rho^{l}|\rho^k)-\varepsilon} \leq \delta\end{equation}
which means that the expectation value of the probability of the good detection is greater than $1-2\delta$. This mean the the error (that we detect a wrong mixed state, or all the measure never decide for the first projection) is smaller than $2\delta$. See that this bound is valid  for all possible $1\leq k \leq \bar{a}$. Denote by $P^k$ the following operator $\Pi^k (\Pi^{k-1})^c\dots (\Pi^{2})^c (\Pi^{1})^c$, with this notation at the end of the procedure, our quantum codeword $\rho_{\alpha_i}$ will be in the form $\frac{P^k \rho^o_{\alpha_i} {P^k}^*}{\Tr(P^k \rho^o_{\alpha_i} {P^k}^*)}$

Let see the second step, now we detect the message. Suppose that we detected that our mixed state is $\rho^k$ which mixed state can be generated by $\iE^{k_1},\iE^{k_2},\dots,\iE^{k_l}$ $1 \leq k_j \leq a, l\leq a$ and we know that our enemy chosen $\iE^o$ so some $k_j=o$. Prepare all $\Pi^{k_j}_{\alpha_i}$ as in Section \ref{kodolas}. Now define
\begin{equation}\check\pi_{i}=span\{\bigcup_{l:\iE^l(\rrho)=\iE^o(\rrho)}\pi^l_{i}\}\end{equation}
where $\pi^l_{i}$ is the typical subspace (\ref{hiv2}) of $\rho^l_{\alpha_i}=\iE_l(\rrho_{\alpha_i})$ And define the typical projections for the message as in (\ref{hiv3}), (\ref{hiv4})
\begin{equation}\hat\pi_{i}=span\left\{P^{k} |s\rangle - \sum_{j=1}^{i-1}\hat\Pi_{j} P^{k}|s\rangle ,|s\rangle \in \check\pi_{i} \right\}
\end{equation}
We made our POVM as in Section \ref{kodolas}, from these orthogonal projection, with a possible complement with an error labelled subspace. Similarly to Section \ref{kodolas}, we define $\hat\Pi_{i-}=\sum_{j=1}^{i-1}\hat\Pi_j$ 

For these measurement a similar statement is true as in Lemma \ref{proj-lem}
\begin{equation}\Tr(\hat{\Pi}_{i} P^k \rho^o_{\alpha_i} {P^k}^* \hat{\Pi}_{i}) \geq \left( \Tr (P^k \tilde\rho^{o}_{\alpha_i} {P^k}^*) -  \Tr (\hat\Pi_{i-} P^k\, \tilde\rho^o_{\alpha_i}{P^k}^*\hat\Pi_{i-}) \right)^2 \end{equation}
The proof is exactly the same as Lemma \ref{proj-lem}.
So we can calculate the error probability of the message detection (with the good ''mixed`` state detection):
\begin{align}
P_e&=\mathbb{E}[\Tr(P^k \rho^o_{\alpha_i} {P^k}^*) \frac{1}{M}\sum_{i=1}^M 1-\Tr(\hat{\Pi}_i \frac{P^k \rho^o_{\alpha_i} {P^k}^*}{\Tr(P^k \rho^o_{\alpha_i} {P^k}^*)} \hat{\Pi}_i)]\leq \\ &\leq\sum_{i=1}^M\frac{ \mathbb{E}[1-\Tr(\hat{\Pi}_i P^k \rho^o_{\alpha_i} {P^k}^* \hat{\Pi}_i)]}{M}\notag
\end{align}
with the previous statement, and with the Jensen's inequality
\begin{align}
P_e&\leq \frac{1}{M}\sum_{i=1}^M\mathbb{E}[1-\left( \Tr (P^k \tilde\rho^o_{\alpha_i} {P^k}^*) -  \Tr (\hat\Pi_{i-} P^k \tilde\rho^o_{\alpha_i}{P^k}^* \hat\Pi_{i-}) \right)^2]\leq\\
&\leq \frac{1}{M}\sum_{i=1}^M\left[1-\left( \Tr (P^k \tilde\rho^o {P^k}^*) -  \mathbb{E}[\Tr (\hat\Pi_{i-} P^k  \tilde\rho^{o}_{\alpha_i}{P^k}^* \hat\Pi_{i-})] \right)^2\right]
\end{align}
Because $P^k \rho^o {P^k}^*\geq 1-2\delta$ and $\Tr(\rho^o-\tilde\rho^o)\geq 1-\delta$ the first term is greater than $1-3\delta$ as in proof of Theorem \ref{basethm}. The second term is
\begin{align}
\mathbb{E}(\Tr( \hat\Pi_{i-} P^k \tilde\rho^o_{\alpha_i} {P^k}^* \hat\Pi_{i-} ))=\mathbb{E}(\Tr(\hat\Pi_{i-} P^k \tilde\rho^o_{\alpha_i} {P^k}^* ))=\\
\mathbb{E}( \Tr(\hspace{-10pt} \sum_{j=1}^{dim(\hat\Pi_{i-})}\hspace{-10pt}|b_
j\rangle \langle b_j| P^k \tilde\rho^o_{\alpha_i} {P^k}^* ))\leq
 \mathbb{E}(\hspace{-10pt}\sum_{j=1}^{dim(\hat\Pi_{i-})}\hspace{-10pt} \langle b_j|\Pi^k \tilde\rho^o_{\alpha_i} \Pi^k |b_j\rangle)\leq dim(\hat\Pi_{i-})\|\Pi^k \tilde\rho^k \Pi^k\|
\end{align}
because $P^k \tilde\rho^o_{\alpha_i} {P^k}^* \leq \Pi^k \tilde\rho^o_{\alpha_i} \Pi^k$, and we assumed that $\iE_o(\rrho)=\iE_k(\rrho)$. Now we have to evaluate $dim(\hat\Pi_{i-})$
\begin{equation}dim(\hat\Pi_{i-})\leq \sum_{j=1}^l\sum_{s=i+1}^M dim(\Pi^{k_j}_{\alpha_s})\leq aM2^{nS(\rho|\alpha)}=a2^{n (R+S(\rho|\alpha))}\end{equation}
We use (\ref{hiv1}) and get
\begin{align}
\mathbb{E}(\Tr( \hat\Pi_{i-} \Pi^k \tilde\rho^o_{\alpha_i} \Pi^k \hat\Pi_{i-} ))\leq
a 2^{-n[S(\rho^o)-S(\rho^o|x)-R-\varepsilon]}
\end{align}
which is smaller than $\delta$ if $n$ is large enough, so
\begin{equation}P_e\leq \frac{1}{M}\sum_{i=1}^M [1-(1-3\delta-\delta)^2<8\delta\leq 8a\varepsilon \end{equation}
Because $\varepsilon< \frac{\gamma}{8a}$ with this the theorem is proved.
\end{subsection}
\end{proof}
\end{section}

\begin{section}{II. Use: Practical considerations}
One could think that, after the articles of Schumacher or Holevo \cite{Schumacher}, \cite{Holevo} that we can communicate classical data through a quantum channel optimally. However this is true only in theory, because to measure a POVM with many output (the needed output of the POVM grows exponentially in $n$ the length of the codeword) is very difficult in practice. But as we will see, this is not the case in the von Neumann measurement, we will give a detection algorithm - a sequence of measure - where the number of outcomes of the measures are always 2.

We introduce the following notation for $1\leq i < j \leq M+1$
\begin{equation}D_{\{i,j\}}=\tilde\Pi_i + \tilde\Pi_{i+1} + \dots + \tilde\Pi_{j}\end{equation}
For simplicity, suppose that $M+1=2^k$. Then the detection algorithm can be the following:
First we measure a von Neumann measurement states from 
$D_{\{1,(M+1)/2\}},D_{\{(M+1)/2+1,M+1\}}$.
In every measurement, if the result is the first operator, we give 0, if the second we give 1.

Now we measure again. Of course on a quantum state that is modified by the previous measurement.
In each next step we half the interval of the previous measurement. If our measurement gave the $\{i,j\}$ our measurement will states from $D_{\{i,i+(j-1)/2\}}$,$D_{\{i+(j-1)/2+1,j\}}$.
For example the second step looks like follows: If we measured 0 then our measurement will states from $D_{\{0,\dots,(M+1)/4\}},D_{\{(M+1)/4+1,\dots,(M+1)/2\}}$, if the previous measurement had gave the result 1 then our measurement will states from :\\$D_{\{(M+1)/2+1,\dots,3(M+1)/4\}},D_{\{3(M+1)/4+1,\dots,(M+1)\}}$.
At the end the 0-s and 1-s give the number of the message in binary form. If we get only 1-s then we declare error.

See that the probability of the good detection not changes. Suppose that we send the first message, then the first measure will give the good result with probability
\begin{equation}\Tr(D_{\{\{1,(M+1)/2\}}\rho_{\alpha_1}D_{\{1,(M+1)/2\}})\end{equation}
and will the state will change to
\begin{equation}\frac{D_{\{1,(M+1)/2\}}\rho_{\alpha_1}D_{\{1,(M+1)/2\}}}{\Tr(D_{\{1,(M+1)/2\}}\rho_{\alpha_1}D_{\{1,(M+1)/2\}})}\end{equation}
The second measurement will be good with probability
\begin{equation}\Tr\left( D_{\{1,(M+1)/4\}} \frac{D_{\{1,(M+1)/2\}}\rho_{\alpha_1}D_{\{1,(M+1)/2\}}}{\Tr(D_{\{1,(M+1)/2\}}\rho_{\alpha_1}D_{\{1,(M+1)/2\}})} D_{\{1,(M+1)/4\}} \right) \end{equation}
But $D_{\{1,(M+1)/4\}} < D_{\{1,(M+1)/2\}}$ so this simplify to
\begin{equation}\Tr\left(\frac{D_{\{1,(M+1)/4\}}\rho_{\alpha_1}D_{\{1,(M+1)/4\}}}{\Tr(D_{\{1,(M+1)/2\}}\rho_{\alpha_1}D_{\{1,(M+1)/2\}})}\right)\end{equation}
And the state change to
\begin{equation}\frac{D_{\{1,(M+1)/4\}}\rho_{\alpha_1}D_{\{1,(M+1)/4\}}}{\Tr(D_{\{1,(M+1)/4\}}\rho_{\alpha_1}D_{\{1,(M+1)/4\}})}\end{equation}
So the probability that the first two measurement was true is
\begin{align}\Tr(D_{\{\{1,(M+1)/2\}}\rho_{\alpha_1}D_{\{1,(M+1)/2\}}) \Tr\left(\frac{D_{\{1,(M+1)/4\}}\rho_{\alpha_1}D_{\{1,(M+1)/4\}}}{\Tr(D_{\{1,(M+1)/2\}}\rho_{\alpha_1}D_{\{1,(M+1)/2\}})}\right) =\\
\Tr(D_{\{1,(M+1)/4\}}\rho_{\alpha_1}D_{\{1,(M+1)/4\}})
\end{align}
with keep going this train of thought we can see, that at the end that the probability that all of the measurement was good is not else, than
\begin{equation}\Tr(D_{i,i}\rho_{\alpha_i}D_{i,i})=\Tr(\tilde\Pi_i\rho_{\alpha_i}\tilde\Pi_i)\end{equation}
which the same as the error probability in Section \ref{kodolas}. Of course this procedure can be generalised to case when we have finite possible outcomes.

This means that it is possible to classically code/decode classical information with quantum apparatus in an optimal way in linear time. This is a quite surprising result, because in classical information theory to reach the Shannon limit in polynomial time is an unresolved problem (the best result needs $n^{\log(n)}$ time). Usually the classical information theory is considered as a part of quantum information theory, which would mean that optimal decoding of classical channel in linear time is possible. This means that if a quantum machine can perform arbitrarily von Neumann measurement with only two possible outcome, then this machine can solve non-polynomial classical problems in linear time.

Now we will show how can be a classical message through a quantum apparatus decoded. Suppose that there is a classical setup with a discrete memoryless channel. There is a given state transition matrix $W(y|x)$ (with input output alphabet $1,2,\dots,l$ $1,2,\dots,d$ ) and a given optimal input distribution $P$.
Now model the classical system with a quantum one. Let define for each $x$ $\rho_x=diag(W(\cdot|x))$ (where $diag(W(\cdot|x))$ denotes a diagonal matrix we get from the output distribution provided by $x$ in another form $\sum_{a \in \{1,\dots,d\}} W(a|x)E_{a,a}$ ). In these case all the classical and all the quantum information quantities are equivalent ($\chi(\iE)=C(W)$, $H(\cdot)=S(\cdot)$). Then we know from Section \ref{kodolas} that there exist $2^{Rn}$ piece of sequence that with $\rho_{\alpha_i}$ quantum codewords we can optimally communicate.  Compute the optimal von Neumann measurement as in Section \ref{kodolas}. Now we use the $\alpha_i$ sequence as an input codeword for our classical channel, and decode the classical channel as follows:
We get the classical signal, we coded into quantum sequences, we perform the measurement, after that we get the number of the message was sent, so we decoded the message (in linear time as in the beginning of these section).

Denote the output signal of $\alpha_j$ by $\Vy(j)$. Denote the $i$-th component of $\Vy(j)$ by $y_i(j)$
We get the signal $\Vy(j)$ and code every symbol of it, into a quantum in the following way:
\begin{equation}y_i(j) \rightarrow E_{y_i(j),y_i(j)}\end{equation}
which means if we get the first symbol of my output alphabet we code into a quantum represented by $E_{1,1}$, where $E_{i,j}$ denotes the matrix with 1 in the $i$-th row $j$-th column and 0 elsewhere.
Denote these quantum sequence by $\mu_{\Vy(j)}$
Now see that the error probability of the event that the $i$-th message was wrongly decoded:
\begin{equation}\mathbb{E}[1-\Tr(\mu_{\Vy(j)})]\end{equation}
We have to take the expectation value because the output sequence $\Vy(j)$ can varied. It can be easily proved that $\mathbb{E}[\mu_{\Vy(j)}]=\rho_{\alpha_j}$.
So the average error probability is same as in Section \ref{kodolas}. Which means that classical messages can be decoded by quantum apparatus in linear ($nR$) time.

\begin{proof}
 of $\mathbb{E}[\mu_{\Vy(j)}]=\rho_{\alpha_j}$
proof with total induction on $n$
for $n=1$ the statement is true by the definition. Suppose it is true for $n-1$ the for $n$
\begin{align}
&\mathbb{E}[\mu_{\Vy(j)}]=\sum_{\Vy \in \{1,\dots,d\}^n} \prod_{i=1}^n W(y_i|\alpha_j(i)) \mu_{\Vy}\\
&=\sum_{\Vy \in \{1,\dots,d\}^{n-1}} \prod_{i=1}^{n-1} W(y_i|\alpha_j(i)) \mu_{\Vy_1^{n-1}} \otimes \sum_{a \in \{1,\dots,d\}} W(a|\alpha_j(n))E_{a,a}
\end{align}
where $\Vy_1^{n-1}$ denote the first $n-1$ symbol of $\Vy$.
But by the definition the last quantity $\sum_{a \in \{1,\dots,d\}} W(a|\alpha_j(n))E_{a,a}$ is not else than $\rho_{\alpha_j(n)}$ so
\begin{align}
\mathbb{E}[\mu_{\Vy(j)}]=\sum_{\Vy \in \{1,\dots,d\}^{n-1}} \prod_{i=1}^{n-1} W(y_i|\alpha_j(i)) \mu_{\Vy_1^{n-1}} \otimes \rho_{\alpha_j(n)}
\end{align}
but for $n-1$ the statement is true, so
\begin{equation}\mathbb{E}[\mu_{\Vy(j)}]=\rho_{\alpha_j}\end{equation}
\end{proof}
\end{section}

\appendix
\begin{section}{proof of Lemma \ref{tip-proj}}
The proof based on typical sequences. These definition is a simplified/modified version of \cite{Csiszar}.
\begin{Def}[Typical sequence]
For a given probability distribution $P$ on $\{1,2,\dots,d\}$ an $\Vx\in \{1,2,\dots,d\}^n$ sequence is called \textit{P-typical} with constant $\delta$, if
\begin{equation}\left|\frac{1}{n}N(a|\Vx)-P(a)\right|\leq \frac{\delta}{\sqrt[4]{n}} \quad \textnormal{for every}\quad a \in \{1,2,\dots,d\} \end{equation}
where $N(a|\Vx)$ means the number occurrences of $a$ in sequence $\Vx$ and, in addition no $a \in \{1,2,\dots,d\}$ with $P(a)=0$ occurs. The set of such sequences will be denoted by $T^n_{[P]_\delta}$ or simply $T_{[P]}$.
\end{Def}
\begin{remark}\label{rem1}
If a sequence $\Vx$ is P-typical as above, then
\begin{equation}| -\sum_{a=1}^d \frac{N(a|\Vx)}{n}\log(P(a)) + \sum_{a=1}^d P(a)\log(P(a))|\leq \frac{Kd\delta}{\sqrt[4]{n}} \end{equation}
if $\delta$ is small enough, because if $P(b)$ is 0 then $N(b|\Vx)=0$ so $\frac{N(b|\Vx)}{n}-P(b)=0$ so the $b$-th element of the sum will be 0. If $P(a)>0$ then $\log(P(a))$ is finite, so $max_{a:P(a)>0}[-log(P(a))]=K$ is finite, so the above sum is smaller than $\frac{Kd\delta}{\sqrt[4]{n}}$.
\end{remark}
\begin{Lem}
For every distribution $P$ on $\{1,2,\dots,d\}$, and for every $\beta >0$
\begin{equation}P^n(T^n_{[P]_\delta})\geq 1-\beta \label{tip-h} \end{equation}
if $n$ is large enough.
\end{Lem}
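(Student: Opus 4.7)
The statement is essentially the weak law of large numbers applied letter-by-letter to the empirical distribution of an i.i.d.\ sequence, so my plan is to reduce it to a concentration inequality (Chebyshev suffices) for each symbol frequency, then take a union bound over the $d$ letters of the alphabet.

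First I would fix the sample $\VX = (X_1,\dots,X_n)$ drawn from $P^n$ and, for each letter $a \in \{1,\dots,d\}$, write the count $N(a|\VX) = \sum_{i=1}^n \mathbf{1}\{X_i = a\}$ as a sum of i.i.d.\ Bernoulli$(P(a))$ indicators. Then $\mathbb{E}[N(a|\VX)/n] = P(a)$ and $\mathrm{Var}(N(a|\VX)/n) = P(a)(1-P(a))/n \leq 1/(4n)$. Applying Chebyshev's inequality with the threshold $\delta/\sqrt[4]{n}$ gives
\begin{equation}
\Prob\!\left(\,\bigl|N(a|\VX)/n - P(a)\bigr| > \delta/\sqrt[4]{n}\,\right) \;\leq\; \frac{P(a)(1-P(a))/n}{\delta^2/\sqrt{n}} \;\leq\; \frac{1}{4\delta^2 \sqrt{n}}\, ,
\end{equation}
which tends to $0$ as $n\to\infty$ for every fixed $\delta>0$.

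Next I would handle the zero-probability letters: if $P(a)=0$ then $X_i \neq a$ almost surely, so $N(a|\VX) = 0$ with probability $1$, and the extra condition in the definition of typicality (no $a$ with $P(a)=0$ occurs) is automatic. Finally, a union bound over the at most $d$ letters gives
\begin{equation}
1 - P^n(T^n_{[P]_\delta}) \;\leq\; \sum_{a=1}^d \Prob\!\left(\,\bigl|N(a|\VX)/n - P(a)\bigr| > \delta/\sqrt[4]{n}\,\right) \;\leq\; \frac{d}{4\delta^2 \sqrt{n}}\, .
\end{equation}
Choosing $n$ so that the right-hand side is below $\beta$ finishes the proof.

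There is no real obstacle here; the only mild subtlety is that the tolerance $\delta/\sqrt[4]{n}$ shrinks with $n$, which is why I use $1/\sqrt{n}$ Chebyshev variance decay instead of the more naive constant-tolerance law of large numbers: the product $(\delta/\sqrt[4]{n})^2 \cdot n = \delta^2 \sqrt{n}$ still diverges, giving the required convergence. Sharper tail bounds (Hoeffding or Sanov) would yield exponential decay, but they are unnecessary for the statement as given.
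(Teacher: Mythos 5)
Your proof is correct and follows essentially the same route as the paper: Chebyshev's inequality on each count $N(a|\VX)$ with variance at most $n/4$ and threshold $n\delta/\sqrt[4]{n}$, yielding a per-letter failure probability of $1/(4\delta^2\sqrt{n})$, followed by a union bound over the $d$ letters. Your write-up is in fact slightly more complete, since you make explicit the union bound and the treatment of letters with $P(a)=0$, both of which the paper leaves implicit.
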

\begin{proof}
If $\VX=X_1,X_2 ,\dots,X_n$ is an i.i.d. random sequence with distribution $P$ then the random variable $N(a|\VX)$ has the expectation value $nP(a)$ and variance $nP(a)(1-P(a))\leq \frac{n}{4}$.
Thus by the Chebishev's inequality
\begin{equation}\textnormal{Pr}\{\left|N(a|\VX)-n(Pa)|>n\frac{\delta}{\sqrt[4]{n}}\right|\leq \frac {1}{4\sqrt{n}\delta^2}\}\end{equation}
for every $a \in \{1,2,\dots,d\}$. From this the assertion follow.
\end{proof}
\begin{proof}[of Lemma \ref{tip-proj}]
with these typical sequences we can make typical subspace as follows:

Let $\sum_{i=1}^d \lambda_i |u_i\rangle \langle u_i|=\rho$ be a spectral decomposition of $\rho$.
Now define $P$ as $P(a)=\lambda_a a \in \{1,2,\dots,d\}$ (In this case $H(P)=S(\rho)$, and let $n$ be large enough to verify \ref{tip-h} with $\beta=\varepsilon/2$. Now we define $\Pi$ the typical projection of $\rho^{\otimes n}$ by as follows
\begin{equation}\Pi=\sum_{\Vx \in T^n_{[P]_\delta}} |u_{x_1}\rangle \langle u_{x_1}|\otimes |u_{x_2}\rangle \langle u_{x_2}|\otimes \dots \otimes |u_{x_n}\rangle \langle u_{x_n}|\end{equation} See that if a sequence $\Vx_1$ differs from $\Vx_2$ then the minimal projection generated by $\Vx_1$ is orthogonal to the minimal projection generated by $\Vx_2$.

For this projection $\hat\Pi$ the assertions (\ref{t/1}) and (\ref{hiv1}) of Lemma \ref{tip-proj} are valid.
See the first assertion
\begin{align}
\Tr(\hat\Pi \rho^{\otimes n} \hat\Pi)=&\Tr(\hat\Pi \rho^{\otimes n})=\\
=&\Tr(\sum_{\Vx \in T^n_{[P]_\delta}} |u_{x_1}\rangle \langle u_{x_1}| \rho \otimes |u_{x_2}\rangle \langle u_{x_2}|\rho \otimes \dots \otimes |u_{x_n}\rangle \langle u_{x_n}|\rho )=\\
=&\sum_{\Vx \in T^n_{[P]_\delta}} \prod_{i=1}^n \langle u_{x_i}| \rho |u_{x_i}\rangle =\sum_{\Vx \in T^n_{[P]_\delta}} \prod_{i=1}^n \lambda_{x_i}=\\
=&\sum_{\Vx \in T^n_{[P]_\delta}} \prod_{i=1}^n P(x_i)=P^n(T^n_{[P]_\delta}) \geq 1-\beta
\end{align}
if $n$ is large enough $n>N_1$, and the last row comes from the definition of $P$ and the previous Lemma.

Observe that (\ref{hiv1}) is true, because of Remark \ref{rem1}. See that spectrum of $\hat\Pi \rho \hat\Pi$ is equal with
\begin{equation}spect(\hat\Pi \rho \hat\Pi)=\{\prod_{a=1}^d \lambda_a^{N(a|\Vx)}, \Vx \in T^n_{[P]_\delta}\} \end{equation}
where
\begin{equation}\prod_{a=1}^d \lambda_a^{N(a|\Vx)}=2^{-n\sum_{a=1}^d -\frac{N(a|\Vx)}{n} \log(\lambda_a) } \end{equation}
and from Remark \ref{rem1} we know that for all $\Vx \in T^n_{[P]_\delta}$
\begin{equation}\prod_{a=1}^d \lambda_a^{N(a|\Vx)}\leq 2^{-n(S(\rho)-\frac{Kd\delta}{\sqrt[4]{n}})  }\end{equation}
where $\frac{Kd\delta}{\sqrt[4]{n}}$ is smaller than $\beta$ if $n$ large enough.

We know from \cite{Petz} that, if $n$ is large enough, there is an another projection $\tilde\Pi$ which satisfy (\ref{t/1}), (\ref{t/3}). Now the projection which satisfy all the assertion of the Lemma is given by $\Pi=\hat\Pi\wedge \tilde\Pi$, because
\begin{align}\Tr(\Pi^c \rho \Pi^c)\leq& \Tr(\hat\Pi^c \rho \hat\Pi^c)+\Tr(\tilde\Pi^c \rho \tilde\Pi^c)=2\beta\\
	\|\Pi \rho \Pi \| \leq& \|\hat\Pi \rho \hat\Pi \| \leq 2^{-n(\S(\rho)-\beta)}\\
	\Tr(\Pi \sigma^{\otimes n} \Pi) \leq& \Tr(\tilde\Pi \sigma^{\otimes n} \tilde\Pi) \leq 2^{-n(D(\rho\|\sigma)-\beta)}
\end{align}

\end{proof}
\end{section}

\end{document}